\begin{document}

\title*{Cycle-centrality in economic and biological networks} 
\titlerunning{Cycle-centrality in complex networks} 

\author{Pierre-Louis Giscard and Richard C. Wilson}
\authorrunning{P.-L. Giscard and R. C. Wilson}

\institute{Pierre-Louis Giscard \at Department of Computer Science, University of York, Deramore Lane, Heslington, York, YO10 5GH, United Kingdom, \email{pierre-louis.giscard@york.ac.uk}
\and Richard C. Wilson \at Department of Computer Science, University of York, Deramore Lane, Heslington, York, YO10 5GH, United Kingdom, \email{richard.wilson@york.ac.uk}}
%
% Use the package "url.sty" to avoid
% problems with special characters
% used in your e-mail or web address
%
\maketitle

\abstract*{Networks are versatile representations of the interactions between entities in complex systems. Cycles on such networks represent feedback processes which play a central role in system dynamics. 
%In spite of this, there is no well-established measure of the importance of any individual cycle. 
In this work, we introduce a measure of the importance of any individual cycle, as the fraction of the total information flow of the network passing through the cycle. This measure is computationally cheap, numerically well-conditioned, induces a centrality measure on arbitrary subgraphs and reduces to the eigenvector centrality on vertices. We demonstrate that this measure accurately reflects the impact of events on strategic ensembles of economic sectors, notably in the US economy. As a second example, we show that in the protein-interaction network of the plant \textit{Arabidopsis thaliana}, a model based on cycle-centrality better accounts for pathogen activity than the state-of-art one. This translates into pathogen-targeted-proteins being concentrated in a small number of triads with high cycle-centrality. Algorithms to compute the centralities of selected cycles or subgraphs are available on the \textsc{Matlab} FileExchange}

\vspace{-35mm}
\abstract{Networks are versatile representations of the interactions between entities in complex systems. Cycles on such networks represent feedback processes which play a central role in system dynamics. In this work, we introduce a measure of the importance of any individual cycle, as the fraction of the total information flow of the network passing through the cycle. This measure is computationally cheap, numerically well-conditioned, induces a centrality measure on arbitrary subgraphs and reduces to the eigenvector centrality on vertices. We demonstrate that this measure accurately reflects the impact of events on strategic ensembles of economic sectors, notably in the US economy. As a second example, we show that in the protein-interaction network of the plant \textit{Arabidopsis thaliana}, a model based on cycle-centrality better accounts for pathogen activity than the state-of-art one. This translates into pathogen-targeted-proteins being concentrated in a small number of triads with high cycle-centrality. Algorithms for computing the centrality of cycles and subgraphs are available for download.}

\vspace{-3.5mm}
\section{Introduction}
\vspace{-3mm}
Networks, that is collections of nodes together with sets of edges linking some of these nodes, naturally encode relations (the edges) between entities (the nodes). The trajectories on a network, called walks, represent the dynamical processes of the system of entities. Networks and walks play a ubiquitous role across many domains, from economy to defence through biology and physics, where graphical models are essential tools to master the interactions and dynamics of complex systems.

Recent research on networks has slowly progressed from questions directly concerning individual entities, to questions regarding the dynamics of the system, from the local to the global scale. Already over the course of the development of vertex-centralities, i.e. measures of the importance of individual nodes, it became clear that vertex-neighborhoods, subgraphs and motifs were of paramount importance to understand the evolution of real networks \cite{Milo2002,Yeger2004}. For example, in a recent study of the propagation of economic shocks in input-output networks, Alatriste Contreras and Fagiolo concluded that ``the systemic
importance of industrial sectors should not be evaluated only
by looking at their economic size [i.e. properties of individual vertices], but also at their position and embeddedness in the complex fabric of input-output relations'' \cite{Conteras2014}. In a biological context, 
Estrada and Rodr\'iguez-Vel\'azquez showed that protein-lethality in \textit{Saccharomyces cerevisiae} was better accounted for by an analysis of the subgraphs to which a protein belongs in the protein-protein interaction network (PPI) rather than by its degree \cite{Estrada2005}. In another study, Mukthar \textit{et al.} showed that while a number of the proteins of the plant \textit{Arabidopsis thaliana} under attack by pathogens were high degree nodes (hubs) in the plant PPI, dozens of these proteins were ``targeted significantly more often [...] than expected given their respective degrees''. Following a thorough statistical analysis of these results, they concluded that protein-targeting by pathogens ``cannot be explained merely by the high connectivity of those target [proteins]'' \cite{Mukhtar2011}, hence calling for further studies on the network environment of these proteins. In addition, it is also well known  that in PPIs, certain small subgraphs of protein interactions, called motifs, are over-represented as compared to what one might expect from random networks \cite{Oltvai2002}. These motifs are believed to perform crucial roles in emergent biological functions \cite{Wuchty2003}, such as the formation of protein complexes, functions which are not readily apparent at the level of single proteins \cite{Hartwell1999, Oltvai2002}.

In spite of all of these observations, 
much attention is still devoted to individual nodes when exploring the dynamics and properties of complex networks. This is possibly because the versatility, ease of implementation and easy to grasp definition of many vertex centralities is lacking an equivalent at the cycle or subgraph level. It is a central objective of this work to remediate to this situation.

We introduce a centrality measure for individual cycles based on the premise that a cycle is central if it intersects an important proportion of all the information flows on the network. In concrete applications, these flows represent actual dynamical processes, that is sequences of interactions between discrete entities, such as wealth exchanges between economic actors or successions of protein reactions in a living organism. This premise provides a clear meaning for the centrality as well as a contextual framework within which to appraise its results. Mathematically, the flow-based formulation leads to a rigorous and unique definition for the cycle-centrality. Computationally, it costs no more to calculate than existing centrality measures on vertices. Numerically, it is well conditioned, always providing a result between 0 and 1.  
Finally, the measure is versatile for it induces a centrality measures on subgraphs and reduces to the eigenvector-centrality on vertices. Algorithms to compute the centralities of  cycles and subgraphs are available on the \textsc{Matlab} FileExchange \cite{MatlabFiles}.

\vspace{-6.5mm}
\section{Centrality measure: theory $\&$ motivations}
\vspace{-4mm}
The measure of cycle-centrality we propose is rooted in recent advances in the algebraic combinatorics of walks on graphs. In this work we only define the few concepts from this background that are necessary to comprehend the centrality measure. 

\vspace{-8mm}
\subsection{Notation}
\vspace{-5mm}
Throughout this article, we consider a finite network $G = (\mathcal{V} ;\mathcal{E})$, $N=|\mathcal{V}|$, $M=|\mathcal{E}|$, which may be weighted and directed. The adjacency matrix of $G$ is denoted $\mathsf{A}_G$ or simply $\mathsf{A}$. If $G$ is weighted then the entry $\mathsf{A}_{ij}$ is the weight of the edge $e_{ij}$ from $i$ to $j$ if this edge exists, and 0 otherwise.

A \textit{walk} $w$ of length $\ell(w)$ from $v_i$ to $v_j$ on $G$ is a sequence $w = e_{i i_1} e_{i_1 i_2} \cdots e_{i_{\ell-1} j}$ of $\ell$ contiguous edges. The walk $w$ is \textit{open} if $i \neq j$ and \textit{closed} (that is a cycle) otherwise.
A \textit{cycle}, also known in the literature under the names \textit{loop}, \textit{simple cycle}, \textit{elementary circuit} and \textit{self-avoiding polygon}, is a closed walk $w = e_{i i_1} e_{i_1 i_2} \cdots e_{i_{\ell-1} i}$ which does not cross the same vertex twice, that is, the indices $i,i_1,\hdots,i_{\ell-1}$ are all different.
% In this work, the starting vertex of a cycle is irrelevant but its orientation is retained. Thus, the triangles $1\to2\to3\to1$ and $2\to3\to1\to2$ represent the same cycle, while $1\to3\to2\to1$ is a different triangle.\\ 

\vspace{-8mm}
\subsection{Definition of the centrality measure}
\vspace{-3mm}
The basic observation underlying our proposed centrality measure for cycles is that structurally, a cycle should be important if it is visited by many walks on the network. Combinatorially, the problem of counting all the walks visiting at least one vertex of a cycle is the graph-theoretic equivalent of counting the integer multiples of a prime number. Indeed, walks, it turns out, obey a semi-commutative extension of number theory in which cycles play the role of the primes. This framework, which is presented elsewhere \cite{Giscard2016}, notably provides an exact formula for the total number of closed walks on the graph which intersect the cycle $\gamma$. Asymptotically, this formula produces a single real number between 0 and 1, a fraction, representing the proportion of cycles intersecting the cycle $\gamma$. It is this number that we propose to use as a marker of structural cycle-importance in networks.
\begin{definition}[cycle centrality]
Let $G$ be a possibly weighted (di)graph, and let $\lambda$ be its maximum eigenvalue. Let $\mathsf{A}$ be the adjacency matrix of $G$, including weights if any. For any cycle $\gamma$, let $\mathsf{A}_{G\backslash \gamma}$ be the adjacency matrix of the graph $G$ where all vertices visited by $\gamma$ and the edges adjacent to them have been removed. Then we define the centrality $c(\gamma)$ of the cycle $\gamma$ as
\vspace{-2mm}
$$
c(\gamma):=\det\left(\mathsf{I}-\frac{1}{\lambda}\mathsf{A}_{G\backslash \gamma}\right).
\vspace{-2mm}
$$
\end{definition}
As outlined in the introduction to this section, the centrality $c(\gamma)$ has a precise combinatorial meaning underpinning its role as a measure of cycle importance. Rigorously we have:
\vspace{-1mm}
\begin{proposition}\label{cMeaning}
Let $G$ be a (di)graph with adjacency matrix $\mathsf{A}$ and let $\gamma$ be a cycle on $G$. Then the total number $n_\gamma(k)$ of cycles of length $k$ on $G$ intersecting the cycle $\gamma$ is asymptotically equal to
\vspace{-2mm}
$$
n_\gamma(k)\sim c(\gamma)\!\left(\frac{1}{\det(\mathsf{I}-z\mathsf{A})}\right)\![k],~~\text{as}~~k\to\infty,
\vspace{-2mm}
$$
where $\big(1/\det(\mathsf{I}-z\mathsf{A})\big)[k]$ stands for the coefficient of order $k$ in the series $1/\det(\mathsf{I}-z\mathsf{A})$.
\end{proposition}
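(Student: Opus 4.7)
The plan is to reduce the claim to a textbook singularity analysis at the Perron pole. First I would identify the generating function of $n_\gamma$:
\[
N_\gamma(z) := \sum_{k \geq 0} n_\gamma(k)\, z^k \;=\; \frac{\det(\mathsf{I} - z\mathsf{A}_{G\backslash \gamma})}{\det(\mathsf{I} - z\mathsf{A})}.
\]
This identity should follow from the semi-commutative ``hike'' calculus of \cite{Giscard2016}: hikes (formal products of simple cycles in which vertex-disjoint primes commute) are enumerated by $1/\det(\mathsf{I} - z\mathsf{A})$ via the walk-theoretic analogue of an Euler product, and the factor $\det(\mathsf{I} - z\mathsf{A}_{G\backslash \gamma})$ realises the inclusion--exclusion over disjoint-cycle patterns supported in $G\setminus\gamma$ that isolates the $\gamma$-intersecting contribution. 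Equivalently, by Jacobi's complementary-minor identity the right-hand side is $\det\bigl[(\mathsf{I} - z\mathsf{A})^{-1}_{V(\gamma) \times V(\gamma)}\bigr]$, a Lindström--Gessel--Viennot-type determinant of walk generating functions rooted on $V(\gamma)$.

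Next I would invoke Perron--Frobenius to locate the dominant singularity. Since $\mathsf{A}$ is non-negative, its spectral radius $\lambda$ is an eigenvalue, so $z = 1/\lambda$ is the smallest-modulus root of $\det(\mathsf{I} - z\mathsf{A})$ and, under standard irreducibility assumptions, a simple pole of $N_\gamma$. Because $\mathsf{A}_{G\backslash \gamma}$ is a proper principal submatrix of $\mathsf{A}$ its spectral radius is strictly less than $\lambda$, so $\det(\mathsf{I} - z\mathsf{A}_{G\backslash \gamma})$ is analytic and non-zero at $z = 1/\lambda$; its value there is $c(\gamma)$ by definition.

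The asymptotic then follows from a routine residue computation. Writing $\det(\mathsf{I} - z\mathsf{A}) = (1 - z\lambda)\,\phi(z)$ with $\phi(1/\lambda) \neq 0$, one has for any $P$ analytic at $z = 1/\lambda$
\[
[z^k]\, \frac{P(z)}{\det(\mathsf{I} - z\mathsf{A})} \;\sim\; \frac{P(1/\lambda)}{\phi(1/\lambda)}\, \lambda^k, \qquad k \to \infty.
\]
Applying this with $P(z) = \det(\mathsf{I} - z\mathsf{A}_{G\backslash \gamma})$ and with $P(z) = 1$ separately, the common factor $\phi(1/\lambda)^{-1}\lambda^k$ cancels in the ratio and gives $n_\gamma(k)/[z^k](1/\det(\mathsf{I}-z\mathsf{A})) \to c(\gamma)$, which is the statement.

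The hard part is the first step. The residue analysis above is standard complex-analytic asymptotics, but proving that ``cycles intersecting $\gamma$'' in the sense of \cite{Giscard2016} are enumerated by the coefficients of $\det(\mathsf{I}-z\mathsf{A}_{G\backslash \gamma})/\det(\mathsf{I}-z\mathsf{A})$ requires working inside the partially commutative hike monoid, and carefully tracking how primes supported on $V(\gamma)$ and on its complement factor and interact --- this is where all the combinatorial substance of the proposition sits.
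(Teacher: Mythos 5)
Your proposal is correct in outline and rests on the same backbone as the paper's argument: both hinge on the identity expressing the generating function of closed walks meeting $\gamma$ as $\det(\mathsf{I}-z\mathsf{A}_{G\backslash \gamma})/\det(\mathsf{I}-z\mathsf{A})$ --- the paper quotes this as Viennot's heaps-of-pieces lemma (with an extra factor $z^{\ell(\gamma)}$), and your Jacobi complementary-minor reformulation is the same identity --- and both then extract the coefficient of order $k$. Where you differ is in the extraction step, and there your route is the stronger one. The paper's proof is explicitly only ``qualitative'': it expands the Cauchy product $\sum_i a_{k-i}x_i$ and substitutes the admittedly incorrect ansatz $a_k=\lambda^k$. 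You instead run a standard simple-pole analysis at $z=1/\lambda$, using Perron--Frobenius to locate the dominant singularity and the strict spectral-radius drop of the proper principal submatrix to guarantee $c(\gamma)\neq 0$. Since the paper defers the fully rigorous argument (a semi-commutative Brun sieve) to an extended version, your honest flagging of the combinatorial identity as the real content, to be imported from the hike calculus, puts you on the same footing as the paper while being more rigorous on the analytic half.

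Two caveats, both shared with the paper. First, your residue computation needs $1/\lambda$ to be the \emph{unique} minimal-modulus root of $\det(\mathsf{I}-z\mathsf{A})$, i.e.\ primitivity rather than mere irreducibility; on a bipartite or periodic graph the coefficients $a_k$ oscillate and the clean $\sim C\lambda^k$ form fails. Second, if Viennot's factor $z^{\ell(\gamma)}$ is retained, the limit of $n_\gamma(k)/a_k$ is $c(\gamma)\lambda^{-\ell(\gamma)}$ rather than $c(\gamma)$; you and the paper both drop this monomial silently, and the resulting constant-factor discrepancy should be acknowledged or absorbed into the normalisation if the proof is to be more than qualitative.
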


\vspace{-4.5mm}
\begin{remark}
If $G$ is weighted then the above Proposition remains true but $n_\gamma(k)$ now designates the total weight of all the cycles of length $k$ on $G$ intersecting $\gamma$. Recall that the weight of a cycle is the product of the weights of the edges it traverses. The weights of different cycles are simply added together in $n_\gamma(k)$.\end{remark}
\begin{proof}[Qualitative proof of Proposition~\ref{cMeaning}]
The full rigorous proof of Proposition~\ref{cMeaning} is very long and will be provided in an extended version of this work. It relies on a semi-commutative extension of the Brun sieve from number theory, which provides the asymptotic result used here as well as an exact  expansion for $n_\gamma(k)$, of which $c(\gamma)$ is only the first term. Here we present a simple qualitative argument explaining the form of $c(\gamma)$ based on a result by X. G. Viennot concerning the combinatorics of heaps of pieces \cite{Viennot1989}. In the context of walks on graphs, Viennot's result indicates the following:
\begin{lemma}[Viennot (1986)]
Let $\gamma$ be a cycle on a finite graph $G$ and let $\mathcal{W}_\gamma$ be the set of closed walks intersecting $\gamma$. Then the ordinary generating series of all walks $w\in \mathcal{W}_\gamma$ is
\vspace{-2mm}
$$
\sum_{w\in\mathcal{W}_p}z^{\ell(w)} = \frac{\det(\mathsf{I}-z\mathsf{A}_{G\backslash \gamma})}{\det(\mathsf{I}-z\mathsf{A})}z^{\ell(\gamma)}.
\vspace{-1mm}
$$
\end{lemma}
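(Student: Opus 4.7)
My plan is to prove Viennot's Lemma via his theory of heaps of pieces \cite{Viennot1989}, specialised to graph walks. In this framework, the pieces are the simple cycles of $G$ and two cycles are concurrent (incompatible in the heap monoid) precisely when they share at least one vertex. The first step is to establish the Cartier--Foata inversion formula in this setting: expanding $\det(\mathsf{I}-z\mathsf{A}_H)$ via the Leibniz formula and regrouping permutations by their cycle decomposition yields, for any induced subgraph $H$,
$$\det(\mathsf{I}-z\mathsf{A}_H) = \sum_{T} (-1)^{|T|} z^{\ell(T)},$$
where $T$ ranges over trivial heaps on $H$ (vertex-disjoint collections of simple cycles). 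The classical M\"obius inversion of heap theory then gives the full heap generating function $\sum_{H\,\text{heap on}\,G} z^{\ell(H)} = 1/\det(\mathsf{I}-z\mathsf{A})$, and analogously for $G\setminus\gamma$.

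Second, I would invoke Viennot's fundamental bijection between closed walks on $G$ and pyramids of cycles equipped with a marked vertex on the unique maximal piece, under which the walk's length equals the total length of its pyramid. A closed walk belongs to $\mathcal{W}_\gamma$ precisely when its heap decomposition admits $\gamma$ in a distinguished position reflecting the walk's traversal of $\gamma$; choosing the canonical representative in which $\gamma$ occupies the top of the pyramid, the enumeration of $\mathcal{W}_\gamma$ reduces to enumerating pyramids on $G$ whose maximal piece is $\gamma$. Such pyramids factor as the piece $\gamma$ (contributing $z^{\ell(\gamma)}$) placed atop a residual heap supported on the rest of the graph.

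The main obstacle is extracting the claimed asymmetric ratio $\det(\mathsf{I}-z\mathsf{A}_{G\setminus\gamma})/\det(\mathsf{I}-z\mathsf{A})$ from the enumeration of residual heaps. A naive ``free'' factorization placing an arbitrary heap on $G\setminus\gamma$ below $\gamma$ would give only $1/\det(\mathsf{I}-z\mathsf{A}_{G\setminus\gamma})$. The correct form requires a sign-reversing involution on residual heaps containing pieces that still meet $\gamma$'s vertex set, or equivalently the Jacobi determinantal identity $\det(\mathsf{I}-z\mathsf{A}_{G\setminus\gamma})/\det(\mathsf{I}-z\mathsf{A}) = \det\bigl(((\mathsf{I}-z\mathsf{A})^{-1})_{\gamma,\gamma}\bigr)$ combined with a Lindstr\"om--Gessel--Viennot interpretation of the submatrix determinant as a signed enumeration of non-crossing systems of walks among the vertices of $\gamma$. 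Executing this last combinatorial--algebraic step cleanly is the delicate part of the proof.
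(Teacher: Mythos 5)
First, note that the paper does not actually prove this lemma: it is imported from Viennot's work and used as a black box inside the qualitative argument for Proposition~1, so your proposal has to be measured against the standard heap-of-pieces proof rather than against anything in the text. Your first two steps are sound: the trivial-heap (vertex-disjoint cycle cover) expansion of $\det(\mathsf{I}-z\mathsf{A}_H)$ and the Cartier--Foata inversion giving $1/\det(\mathsf{I}-z\mathsf{A})$ as the generating function of all heaps are the right foundations, and you correctly read ``$w\in\mathcal{W}_\gamma$'' in the divisibility sense (the pyramid of $w$ has $\gamma$ as its top piece), which is what the factor $z^{\ell(\gamma)}$ forces.

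The gap is in the last step, and it is twofold. First, you mischaracterize the residual object: when $\gamma$ is removed from the top of such a pyramid, what remains is \emph{not} ``a heap supported on the rest of the graph'' with possibly a few stray pieces meeting $V(\gamma)$ --- it is a heap on all of $G$ \emph{every} maximal piece of which intersects $V(\gamma)$ (this is exactly the condition for $\gamma$, once pushed on top, to become the unique maximal piece). Second, the identity you are missing is not Jacobi's complementary-minor identity plus Lindstr\"om--Gessel--Viennot (which, for configurations of closed walks on a general digraph, does not yield a clean positive enumeration); it is Viennot's lemma on heaps with prescribed maximal pieces: for any subset $\mathcal{M}$ of the basic pieces, the generating function of heaps whose maximal pieces all lie in $\mathcal{M}$ equals $D_{\mathcal{B}\setminus\mathcal{M}}(z)/D(z)$, where $D_{\mathcal{S}}$ denotes the alternating trivial-heap polynomial restricted to pieces in $\mathcal{S}$. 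Applied with $\mathcal{M}$ the set of cycles meeting $V(\gamma)$, so that $\mathcal{B}\setminus\mathcal{M}$ is the set of cycles of $G\backslash\gamma$, this gives precisely $\det(\mathsf{I}-z\mathsf{A}_{G\backslash \gamma})/\det(\mathsf{I}-z\mathsf{A})$, and the lemma follows upon multiplying by $z^{\ell(\gamma)}$ and invoking the walk--pyramid bijection. That prescribed-maximal-pieces lemma is itself proved by the sign-reversing involution you gesture at (on pairs consisting of a heap with maximal pieces in $\mathcal{M}$ and a trivial heap), so your instinct is right, but as written the decisive step is both misdirected and unexecuted.
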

Now let $1/\det(\mathsf{I}-z\mathsf{A})=\sum_{n=0}^{\infty} a_n z^n$ and $\det(\mathsf{I}-z\mathsf{A}_{G\backslash \gamma})=\sum_{n=0}^{N-\ell(\gamma)}x_n z^n$. If we expand the ratio of determinants from Viennot's lemma, the coefficient of order $k$ in the expansion then reads
\vspace{-4mm}
$$
a_k x_0 +a_{k-1} x_1 + a_{k-2} x_2+\cdots +a_0 x_k = \sum_{i=0}^k a_{k-i}\,x_i.
\vspace{-2mm}
$$
We remark that since the determinant $\det(\mathsf{I}-z\mathsf{A})$ is a polynomial in the eigenvalues of the graph $G$, asymptotically, the coefficient of order $k$ of its inverse $1/\det(\mathsf{I}-z\mathsf{A})$ should grow as $\lambda^k$. Taking $a_k=\lambda^k$ for all $k$, it would follow that $a_{k-i}=a_k\lambda^{-i}$ and 
\vspace{-2mm}
$$
\sum_{i=0}^k a_{k-i}\,x_i = a_k \sum_{i=0}^k \frac{x_i}{\lambda^i}.
\vspace{-1mm}
$$
In the situation where $k\geq N-\ell(\gamma)$, no term is missing from the sum on the right hand side, i.e. $\sum_{i=0}^k \frac{x_i}{\lambda^i}= \det(\mathsf{I}-\frac{1}{\lambda}\mathsf{A}_{G\backslash \gamma})$, \textit{qualitatively} explaining the form of $c(\gamma)$. It is remarkable that this form is unchanged by fully rigorous arguments in which the (incorrect) assumption that $a_k=\lambda^k$ is relaxed.\qed
\end{proof}

We can further confirm the meaning of $c(\gamma)$ by noting that the series $1/\det(\mathsf{I}-z\mathsf{A})$ itself has a combinatorial meaning: it counts multi-ensemble of walks, known as hikes \cite{Giscard2016}. Then $c(\gamma)$ is the (weighted) fraction of such multi-ensembles which are closed walks intersecting $\gamma$. In other words, $c(\gamma)$ is the proportion of the total information flow of the network that passes through $\gamma$.
A corollary of these observations is that the cycle-centrality satisfies a highly desirable property for such measures:
\begin{proposition}
Let $G$ be a (weighted di)graph with non-negative edge weights and let $\gamma$ be a cycle on $G$.
Then
$$
0\leq c(\gamma)\leq 1.
$$
\end{proposition}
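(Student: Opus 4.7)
The plan is to extract both inequalities directly from the combinatorial interpretation already afforded by Proposition~\ref{cMeaning}, together with the remark identifying $1/\det(\mathsf{I}-z\mathsf{A})$ as the generating series of hikes. First I would rewrite the asymptotic equivalence of Proposition~\ref{cMeaning} as the limit
$$
c(\gamma) \;=\; \lim_{k\to\infty} \frac{n_\gamma(k)}{\bigl(1/\det(\mathsf{I}-z\mathsf{A})\bigr)[k]},
$$
which is legitimate as soon as the denominator is eventually nonzero (a point addressed below). Both bounds then reduce to term-wise estimates on this ratio for large $k$.

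For $c(\gamma)\geq 0$, I would observe that in the weighted setting both sequences are polynomials, with non-negative integer coefficients, in the edge weights of $G$: the numerator $n_\gamma(k)$ as a sum over cycles of length $k$ intersecting $\gamma$, and the denominator $\bigl(1/\det(\mathsf{I}-z\mathsf{A})\bigr)[k]$ as a sum over hikes of length $k$ with the same weight convention. Since the edge weights are non-negative, both sequences are non-negative, hence so is their ratio and its limit. For $c(\gamma)\leq 1$, I would use that a single cycle is itself a (singleton) hike, so the cycles of length $k$ intersecting $\gamma$ form a sub-collection of the hikes of length $k$. With matching weights this yields the term-wise inequality $n_\gamma(k)\leq \bigl(1/\det(\mathsf{I}-z\mathsf{A})\bigr)[k]$ for every $k$, and passing to the limit gives $c(\gamma)\leq 1$.

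The main obstacle lies in the two background facts that the excerpt invokes only informally and that are proved in \cite{Giscard2016}: first, that cycles genuinely embed into hikes with identical weight contributions (so that the sub-collection argument above is actually a numerical inequality), and second, that for any $G$ carrying at least one cycle the hike generating function has eventually positive coefficients, making the limiting ratio well defined. Degenerate configurations — acyclic $G$ (vacuous), or $\gamma$ visiting every vertex so that $\mathsf{A}_{G\backslash\gamma}$ is empty — should be treated directly from the determinantal formula, in which case $c(\gamma)$ reads as the empty product $1$ and the bounds hold trivially.
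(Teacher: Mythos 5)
Your proposal is correct and follows essentially the same route as the paper's own proof: deducing $c(\gamma)\leq 1$ from the term-wise inequality $n_\gamma(k)\leq \bigl(1/\det(\mathsf{I}-z\mathsf{A})\bigr)[k]$ (cycles being a sub-collection of hikes) and non-negativity from the non-negativity of both coefficient sequences under non-negative edge weights. Your additional attention to the well-definedness of the limiting ratio and to degenerate cases goes slightly beyond what the paper records, but does not change the argument.
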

\begin{proof}
The result $c(\gamma)\leq 1$ follows immediately from Proposition~\ref{cMeaning}, by observing that there are necessarily less walks than multi-ensembles of walks (hikes), i.e. $n_{\gamma}(k)\leq \big(1/\det(\mathsf{I}-z\mathsf{A})\big)[k]$. This continues to hold true on positively weighted graphs, where the total weight carried by walks is necessarily less than that carried by hikes. 
Positivity of $c(\gamma)$ follows from the positivity of both $n_{\gamma}(k)$ and $\big(1/\det(\mathsf{I}-z\mathsf{A})\big)[k]$, itself guaranteed by the positivity of the edge weights.\qed
\end{proof}

%\begin{remark}
%While studying the networks of Sections~\ref{EcoSection} and \ref{BioSection}, we found empirically that the cycle-centralities of short individual cycles are quite small, being typically between circa $10^{-8}$ and $10^{-2}$. In addition, 
%
%As stressed earlier, this is because
%
%\end{remark}

\vspace{-9mm}
\subsection{Extension to arbitrary subgraphs}
\vspace{-4mm}
The cycle-centrality measure $c(\gamma)$, although rooted in the combinatorics of cycles, naturally extends to a  centrality measure $c(H)$ for induced subgraphs $H\prec G$, which quantifies the (weighted) proportion of closed walks, i.e. dynamical processes, intersecting the subgraph $H$.
\begin{definition}[Induced subgraph centrality]
Let $G$ be a (weighted di)graph, and let $\lambda$ be its maximum eigenvalue. Let $\mathsf{A}$ be the adjacency matrix of $G$, including weights if any. Let $H\prec G$ be an induced subgraph of $G$ and let $\mathsf{A}_{G\backslash H}$ be the adjacency matrix of the graph $G$ where all vertices of $H$ and the edges adjacent to them have been removed. Then we define the induced subgraph centrality as
\vspace{-2mm}
$$
c(H):=\det\left(\mathsf{I}-\frac{1}{\lambda}\mathsf{A}_{G\backslash H}\right).
\vspace{-1mm}
$$
\end{definition}
\begin{proposition}\label{cMeaning2}
Let $G$ be a (weighted di)graph with adjacency matrix $\mathsf{A}$ and let $H\prec G$ be an induced subgraph of $G$. Then the total number (weight) $n_H(k)$ of cycles of length $k$ on $G$ intercepting the subgraph $H$ is 
\vspace{-2mm}
$$
n_H(k)\sim c(H)\!\left(\frac{1}{\det(\mathsf{I}-z\mathsf{A})}\right)\![k],~~\text{as}~~k\to\infty.
\vspace{-1mm}
$$
\end{proposition}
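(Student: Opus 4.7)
The plan is to follow the same blueprint as the qualitative proof of Proposition~\ref{cMeaning}, replacing Viennot's lemma by an analogous factorization for induced subgraphs. First I would establish that for an induced subgraph $H \prec G$, the generating series for closed walks on $G$ that visit at least one vertex of $H$ admits a factorization of the form
$$
\sum_{w \in \mathcal{W}_H} z^{\ell(w)} \;=\; P_H(z)\,\frac{\det(\mathsf{I} - z\mathsf{A}_{G\backslash H})}{\det(\mathsf{I} - z\mathsf{A})},
$$
where $P_H(z)$ is a polynomial in $z$ of bounded degree which plays the role of the $z^{\ell(\gamma)}$ prefactor in the single-cycle case. This identity should follow from Viennot's heap-of-pieces theory once one observes that any closed walk touching $H$ admits a canonical decomposition into a minimal piece anchored on $H$ together with an independent hike on the residual graph $G \backslash H$: summing over admissible anchors furnishes $P_H(z)$, while the residual hike supplies the rational factor $\det(\mathsf{I} - z\mathsf{A}_{G\backslash H})/\det(\mathsf{I} - z\mathsf{A})$.

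Once this factorization is available, the asymptotic analysis is essentially identical to the one performed for cycles. Writing $\det(\mathsf{I} - z\mathsf{A}_{G\backslash H}) = \sum_n x_n z^n$ and $a_k = [z^k]\,1/\det(\mathsf{I} - z\mathsf{A})$, the coefficient of $z^k$ on the right-hand side is a convolution of the coefficients of $P_H$, the $x_n$ and the $a_n$. Since $a_k$ grows like $\lambda^k$ up to a subexponential prefactor, the ratios $a_{k-i}/a_k$ tend to $\lambda^{-i}$ as $k \to \infty$, so the polynomial factors collapse, in the limit, to evaluations at $z = 1/\lambda$. After absorbing the bounded contribution of $P_H(1/\lambda)$ into the normalization of $n_H(k)$, one is left with
$$
\lim_{k \to \infty}\frac{n_H(k)}{a_k} \;=\; \det\!\left(\mathsf{I} - \tfrac{1}{\lambda}\,\mathsf{A}_{G\backslash H}\right) \;=\; c(H),
$$
which is the claim.

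The main obstacle will be justifying the subgraph version of Viennot's lemma with the correct prefactor and showing that it indeed controls the count of \emph{cycles} (simple circuits) intersecting $H$, rather than merely the count of arbitrary closed walks or hikes touching $H$. For a single cycle $\gamma$ there is an unambiguous notion of ``multiple of $\gamma$'' in the semi-commutative number theory of walks invoked in the proof of Proposition~\ref{cMeaning}; for an arbitrary induced subgraph $H$, one must instead argue, via the same Brun-sieve extension, that inclusion-exclusion over the cycles supported on $H$ telescopes into the single determinantal factor $\det(\mathsf{I} - z\mathsf{A}_{G\backslash H})$. This reduces the problem to a heap-theoretic bijection between pieces avoiding $V(H)$ and hikes on $G \backslash H$, which I expect to be the only genuinely new ingredient beyond the cycle case; all the remaining estimates are then carried over verbatim, and the full rigorous derivation can be relegated, as in Proposition~\ref{cMeaning}, to the extended version of this work.
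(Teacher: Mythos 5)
Your proposal is correct and takes essentially the same route as the paper: the paper's entire proof of Proposition~\ref{cMeaning2} is the single remark that it ``is similar to that of Proposition~\ref{cMeaning}'', and your blueprint --- a subgraph analogue of Viennot's factorization followed by the same convolution asymptotics with $a_{k-i}/a_k\to\lambda^{-i}$ --- is precisely that adaptation, carried out to the same qualitative level of rigor (including deferring the fully rigorous sieve argument to the extended version). The obstacles you flag, namely justifying the prefactor $P_H(z)$ and its normalization at $z=1/\lambda$, are real but are exactly the points the paper itself leaves implicit in both propositions.
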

The proof of this result is similar to that of Proposition~\ref{cMeaning}. Furthermore it  holds that if all weights are non-negative, then for any induced subgraph $H$ of $G$, $0\leq c(H)\leq 1$.

\vspace{-7.5mm}
\subsection{Recovering the eigenvector vertex centrality}
\vspace{-3.5mm}
The problem of quantifying the importance of individual nodes in networks has a long history of research, which has led to well established measures such as the degree \cite{Albert1999},  exponential \cite{Estrada2005}, resolvent \cite{Katz1953} and eigenvector centralities \cite{Bonacich1972}, the latter playing plays a central role in network analysis, notably through the PageRank algorithm \cite{Bryan2006}. 
Recall that the centrality of vertex $i$ in the first three measures is the (weighted) degree of vertex $i$; and the $i$th entries of $e^{\mathsf{A}}\mathbf{1}$ and $(\mathsf{I}-\alpha\mathsf{A})^{-1}\mathbf{1}$, respectively. In these expressions $\mathbf{1}$ is the column vector full of ones and $0\leq\alpha< 1/\lambda$ is the Katz parameter. 
The last measure, the eigenvector centrality, here denoted $\text{eig}(i)$, is the value of the $i$th entry of the eigenvector corresponding to the largest eigenvalue of the graph. 
%\cite{Albert1999},  exponential \cite{Estrada2005}, resolvent \cite{Katz1953} and eigenvector centralities \cite{Bonacich1972}, the latter playing a central role in network analysis, notably through the PageRank algorithm \cite{Bryan2006}. 

These measures stem from the idea that a vertex is important if it is the starting point of many closed walks. Naively summing over all such walks to define a centrality leads to a divergent sum however. This problem was resolved through two strategies: i) giving walks of length $\ell$ an additional weight of $\alpha^\ell$, with $\alpha<1/\lambda$ to guarantee convergence of the sum. This yields the resolvent centrality. Or ii) giving such walks a weight of $1/\ell!$, once again to guarantee convergence, an approach which yields the exponential centrality. Both approaches lend smaller weights to longer walks and are thus mostly sensitive to the close neighbourhoods of vertices.
A third point of view is based on the observation that, on Markov chains, the $i$th entry of the dominant eigenvector is the probability of walking from $i$ to itself in the stationary distribution. The eigenvector centrality is based on an extension of this observation.  

In the context of cycles and induced subgraphs, a natural way to define a consistent vertex centrality measure is to set it to be the centrality $c(i)$ of the singleton subgraph containing only the vertex $i$.\footnote{Since the cycle centrality and its extension to subgraphs are consistent, $c(i)$ is also equal to the cycle centrality of a self-loop $i\to i$ from vertex $i$ to itself.} Immediately then $c(i)$ is the asymptotic proportion of closed walks passing through $i$ on $G$ and a measure of the importance of this vertex. This centrality is essentially the same as the eigenvector centrality:
%\footnote{Combinatorially speaking, this is not surprising at all since any closed walk interesecting the self-cycle $i\to i$ passes through $i$ and Thus forms a cycle from $i$ to itself. Such cycles are counted by the }
\vspace{-1mm}
\begin{proposition}\label{EigCentr}
Let $G$ be a (weighted) graph with adjacency matrix $\mathsf{A}$ and largest 
eigenvalue $\lambda$. Let $\eta:=\lim_{z\to 1/\lambda}(1-
\lambda z)^{-1}\det(\mathsf{I}-z\mathsf{A})$.
Then 
\vspace{-2mm}
$$
c(i) = \eta\, \text{eig}(i)^2.
$$
\vspace{-8mm}
\end{proposition}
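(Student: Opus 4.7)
My plan rests on a classical resolvent identity together with a residue calculation at the Perron eigenvalue.

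First I would relate the two determinants through the Jacobi/cofactor formula for matrix minors: for any invertible matrix $\mathsf{M}$ and index $i$, the $(i,i)$ entry of $\mathsf{M}^{-1}$ equals the cofactor $\det(\mathsf{M}_{\backslash i})/\det(\mathsf{M})$, where $\mathsf{M}_{\backslash i}$ is the principal submatrix obtained by deleting row and column $i$. Applied to $\mathsf{M}=\mathsf{I}-z\mathsf{A}$ this yields the key identity
$$
\det(\mathsf{I}-z\mathsf{A}_{G\backslash i}) \;=\; \det(\mathsf{I}-z\mathsf{A})\,\bigl[(\mathsf{I}-z\mathsf{A})^{-1}\bigr]_{ii},
$$
valid as an equality of rational functions of $z$. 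Setting $z=1/\lambda$ (understood as a limit) produces $c(i)$ on the left.

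Next I would expand the resolvent on the right in the spectral basis of $\mathsf{A}$. Writing $\mathsf{A}=\sum_k \lambda_k v_k u_k^{\top}$ with biorthonormal right eigenvectors $v_k$ and left eigenvectors $u_k$ (taking $v_k=u_k$ in the symmetric case),
$$
\bigl[(\mathsf{I}-z\mathsf{A})^{-1}\bigr]_{ii} \;=\; \sum_{k}\frac{v_k(i)\,u_k(i)}{1-z\lambda_k}.
$$
Perron--Frobenius guarantees that the largest eigenvalue $\lambda=\lambda_1$ is simple (on the strongly connected component dominating the spectrum), so only the $k=1$ term has a pole at $z=1/\lambda$. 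Factoring $\det(\mathsf{I}-z\mathsf{A})=(1-\lambda z)P(z)$ with $P(z)=\prod_{k\geq 2}(1-z\lambda_k)$, the definition of $\eta$ gives $\eta=P(1/\lambda)$.

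Multiplying the two sides of the cofactor identity and letting $z\to 1/\lambda$, the $k=1$ summand in the resolvent produces the finite limit $P(1/\lambda)\,v_1(i)\,u_1(i)=\eta\,\mathrm{eig}(i)^2$ (since in the standard normalisation the eigenvector centrality is $v_1(i)=u_1(i)=\mathrm{eig}(i)$, up to a sign that squares away), while every $k\geq 2$ summand is killed by the prefactor $(1-\lambda z)$. This yields exactly $c(i)=\eta\,\mathrm{eig}(i)^2$.

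The main obstacle is really bookkeeping rather than conceptual: one must justify the simplicity of $\lambda$ so that the pole is first-order, and in the non-symmetric/weighted setting one must fix normalisation conventions so that $v_1(i)u_1(i)$ coincides with $\mathrm{eig}(i)^2$. For undirected non-negative weighted graphs this is immediate from Perron--Frobenius and symmetry; in the directed case one must either restrict to a primitive irreducible component or absorb the biorthonormality constant into $\eta$. Once these conventions are set, the rest is algebra.
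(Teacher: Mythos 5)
Your proposal is correct and follows essentially the same route as the paper: the same cofactor/adjugate identity $\det(\mathsf{I}-z\mathsf{A}_{G\backslash i})=\det(\mathsf{I}-z\mathsf{A})\bigl[(\mathsf{I}-z\mathsf{A})^{-1}\bigr]_{ii}$, followed by the observation that as $z\to 1/\lambda$ only the dominant (simple, by Perron--Frobenius) eigenvalue contributes, leaving $\eta$ times the $(i,i)$ entry of the spectral projector, i.e.\ $\eta\,\text{eig}(i)^2$. You merely spell out via the explicit spectral expansion of the resolvent what the paper states as ``the adjugate tends to $\eta\,\mathsf{P}_\lambda$''.
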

\begin{proof}
Let $\mathsf{R}(z):=(\mathsf{I}-z\mathsf{A})^{-1}$ be the resolvent of the graph. 
The adjugate formula indicates that 
\vspace{-3mm}
$$
\text{Adj}(\mathsf{I}-z \mathsf{A})_{ii}=\det(\mathsf{I}-z \mathsf{A})\mathsf{R}(z)_{ii}=\,\det(\mathsf{I}-z \mathsf{A}_{G\backslash i}).
\vspace{-1mm}
$$
Hence $\lim_{z\to1/\lambda}\text{Adj}(\mathsf{I}-z \mathsf{A})_{ii}=c(i)$. But since $\lambda$ is the largest eigenvalue of $G$, the adjugate in this formula tends to the projector $\mathsf{P}_\lambda$ onto the corresponding (dominant) eigenvector. More precisely and assuming that the conditions for the Perron-Frobenius Theorem hold
%\footnote{If the largest eigenvalue is not unique and has multiplicity $g>1$, then Proposition~\ref{EigCentr} is still verified with $\eta=\lim_{z\to 1/\lambda}(1-
%\lambda z)^{-g}\det(\mathsf{I}-z\mathsf{A})$.} 
then $\lim_{z\to1/\lambda}\text{Adj}(\mathsf{I}-z \mathsf{A})_{ii} = \eta \big(\mathsf{P}_\lambda\big)_{\!ii}$\,, from which the result follows.\qed 
\end{proof}
%\begin{remark}The simple relation outlined by Proposition~\ref{EigCentr} between eigenvector centrality and $c(i)$ does not extend to longer cycles $\gamma$ and larger subgraphs in the sense that the centralities $c(\gamma)$ or $c(H)$ are not proportional to the product nor the sum of the eigenvector centralities of the vertices visited by the cycle/belonging to the subgraph. 
%\end{remark}

\vspace{-3mm}
\begin{remark}The idea of using network flows to measure the importance of vertices was first proposed by Freeman and coworkers \cite{Freeman1977, Freeman1991}.
In spite of conceptual similarities with the cycle-centrality introduced here, these measures are genuinely different. The flow-betweenness centrality of a vertex is defined either as the number of shortest simple paths \cite{Freeman1977} or of all simple paths \cite{Freeman1991} passing through a given vertex. In this context, a simple path is a walk which is not allowed to visit any vertex more than once. As a consequence, the flow-betweenness is computationally difficult to obtain, the problem of counting simple paths being $\#$P-complete \cite{Valiant1979}. 
\end{remark}

Further mathematical properties of the cycle-centrality will be presented in an extended version of this work.

\vspace{-8mm}
\subsection{Computational cost}
\vspace{-3.5mm}
Computationally speaking, since $c(\gamma)$ involves a determinant, it costs $O\big((N-|\gamma|)^{3}\big)$ operations to calculate, where $N-|\gamma|$ is the number of vertices of the graph $G\backslash\gamma$. In practice however, $c(\gamma)$ can be approximated, even on very large networks, by retaining only a set $\{\mu_1,\cdots,\mu_q\}$ of dominant eigenvalues of $\mathsf{A}_{G\backslash \gamma}$ (or $\mathsf{A}_{G\backslash H}$), yielding $c(\gamma)\simeq\prod_{i=1}^q(1-\mu_i/\lambda).$ Convergence of this approximation 
%is guaranteed by the interlacing theorem which implies $|\mu_i/\lambda|<1$ and 
can be tested by increasing the number $q$ of retained eigenvalues. 
%In addition, if one want to rank cycles by their centralities as opposed to calculate the precise values of these centralities, then this rank can be obtained with
Algorithms for the calculation of $c(\gamma)$ are available for download, see \cite{MatlabFiles}.

In addition, a problem encountered by \textit{any} cycle- or subgraph-based centrality is that it requires some knowledge of the cycles or subgraphs whose importance is to be measured. For exemple, to rank all connected induced subgraphs on $\ell$ vertices by centrality value, regardless of what this centrality is defined to be, one must first find them. This step costs $O(N\Delta^\ell)$ operations, $\Delta$ being the maximum degree of the graph.  
There is therefore no doubt that cycle- and subgraph-centralities will, overall, be computationally more expensive to calculate than vertex-centralities. What we argue here, is that the performances of the resulting models justify the additional costs of the analysis. 

\vspace{-7.5mm}
\section{Economic networks}\label{EcoSection}
\vspace{-3mm}
\begin{figure}[t!]
\vspace{-2mm}
\centering
\includegraphics[width=.8\linewidth]{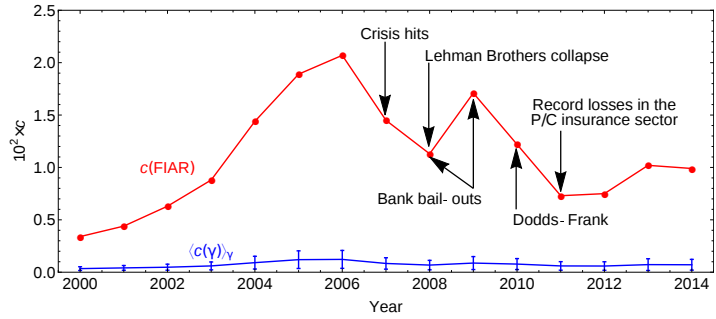}
\vspace{-2mm}
\caption{\label{FIAR} Temporal evolution of the cycle-centrality of the FIAR clique (red circles) during the 2000$-$2014 period compared to the average cycle-centrality of all four-vertices cycles $\langle c(\gamma)\rangle$ (blue) with error-bars indicating the standard deviation around the average. Important events over the period are indicated. 
\vspace{-5mm}}
\end{figure} 
\vspace{-1mm}
In order to test the viability of the cycle-centrality defined in the preceding section, we studied the economy of the United-States, United-Kingdom, Germany and France  over the period 2000$-$2014 using the World Input-Output Database (2016 release) \cite{Timmer2015, Timmer2016, WIO2016web}. The data provides the flow of capital on a yearly basis between 55 sectors of the economy, yielding a weighted directed  macro-economic graphical model of the evolution of each country during 15 full years. 
%From a network-analysis point of view, it is important to note that these networks are full (US, UK), or very nearly so (France, Germany) so that structural methods such as clustering are relatively ill-suited to their study. 

We calculated the cycle-centrality of all 1,485 edges, 26,235 triangles, 341,055 squares and 3,478,761 pentagons on each network for each year of the interval 2000$-$2014 \cite{MatlabFiles}. This task took a total of circa 1hour per country on a 3.1~GHz Intel Core i7 MacBook Pro.

\vspace{-8mm}
\subsection{Preliminary results}
\vspace{-4mm}
In the case of the US, the most important edge on average over the period 2000$-$2014 was found to involve the real-estate and insurance sectors, whilst all triangles and squares whose centrality was within $40\%$ of the maximum observed centrality involved these two sectors and/or the financial industry. This is different for other countries, e.g. the dominant set of actors of the German economy was composed of the ``Manufacture of motor vehicles and trailers'', ``Real estate activities'' and ``Administrative and support service activities''. The French economy saw the most capital flowing through the ``Electricity, gas, steam and air conditioning supply'', ``Construction'' and ``Legal and accounting activities; activities of head offices and management consultancy'' sectors, with cycles involving the ``Manufacture of food, beverages and tobacco products'' following closely behind. This sector was also important in the US economy, being present in most of the dominant squares and pentagons. Overall, these results confirm that the cycle-centrality functions well as an indicator of the importance of groups of agents in dynamical processes on complex networks, in this case dominant sets of sectors ranked in terms of capital flows. 

\vspace{-7mm}
\subsection{Case study: finance-insurance-real estate in the US economy}
\vspace{-3mm}
\subsubsection{Evolution of intercepted capital flow}
\vspace{-2mm}
Of great interest to the study of the recent economic history of the United-States is the role played by the finance, insurance and real estate sectors. We thus selected the following four sectors for further study:
\begin{itemize}
\setlength\itemsep{-.1em}
\item ``\textbf{F}inancial service activities, except insurance and pension funding'';
\item ``\textbf{I}nsurance, reinsurance and pension funding, except compulsory social security'';
\item ``\textbf{A}ctivities auxiliary to financial services and insurance activities'';
\item ``\textbf{R}eal estate activities''.
\end{itemize}
\vspace{-1mm}
These four sectors form a clique, here called ``FIAR''. The cycle-centrality $c(\text{FIAR})$ evaluates the weight and frequency with which wealth exchange within the US economy passed through the FIAR clique over the 15 years period from 2000 to 2014. The results are shown on Fig.~(\ref{FIAR}) and correlate with the main events surrounding the 2007-2009 crisis, showing the effects of bank bail-outs and the introduction of the Dodds-Frank act. This legislation is seen to have contained and stabilised the importance of the FIAR clique in the US economy.  To these observations, we can perhaps add the Financial Services Modernization Act of 1999 which repealed parts of the Glass$-$Steagall Act and which explains the subsequent exponential increase of $c(\text{FIAR})$ over the period 2000$-$2006 \cite{Greenwood2013}.

The central role played the FIAR sectors is perhaps best  summarised by a single number: the time-average $\langle c(\text{FIAR})\rangle_\tau$ is nearly 15 times higher than the cycle and time average $\langle \langle c(\gamma)\rangle_\gamma\rangle_\tau$, where the cycle averaging $\langle .\rangle_\gamma$ is effected over all four-vertices cycles of the graph. This means the FIAR clique intersected on average 15 times more capital flow every year of the 2000$-$2014 period than the average ensemble of four economic sectors.%\footnote{As expected, this ratio is even larger when comparing e.g. with the industries in the primary sector of the economy. Grouping ``\textbf{C}rop and animal production, hunting and related service activities'', ``\textbf{F}orestry and logging'', ``\textbf{F}ishing and aquaculture'' and ``\textbf{M}ining and quarrying'' together, we found that the time-average centrality $\langle c(\text{CFFM})\rangle_\tau$ is over 180 times smaller than $\langle c(\text{FIAR})\rangle_\tau$!}
%While $c(\text{FIAR})$ is on the order of $2\%$ and thus may seem small, 
%it is amongst the highest centrality values over all ensembles of four sectors in the US economy. Empirically, we have observed in all countries that cycle-centralities obey a power law.

 \vspace{-4mm}
\subsubsection{Comparison with alternative centralities}
\vspace{-2mm}
In order to contrast the performance of $c(\gamma)$ with those of existing measures, a simple approach consists in defining cycle-centralities from the sum of the vertex centralities of the individual vertices visited by the cycles. Although this strategy has been criticised \cite{Everett1999}, it provides insights into the information content of the various measures. We thus define $\Sigma_{CS}(\gamma)$, $\Sigma_{R}(\gamma)$ and $\Sigma_{eig}(\gamma)$, the sums of the exponential, resolvent and eigenvector centralities of the vertices visited by $\gamma$, respectively.\footnote{
In the case of $\Sigma_{CS}(\gamma)$, we had to introduce a regularisation parameter $r$ such that $e^{\mathsf{A}/r}$ converges in Matlab and $\Sigma_{CS}(\gamma)$ could be computed.
We then verified that the relative 
variations of $\Sigma_{CS}(\gamma)$ were \emph{qualitatively} independent 
from $r$.
}

\begin{figure}[t!]
\vspace{-2mm}
\centering
\includegraphics[width=.8\linewidth]{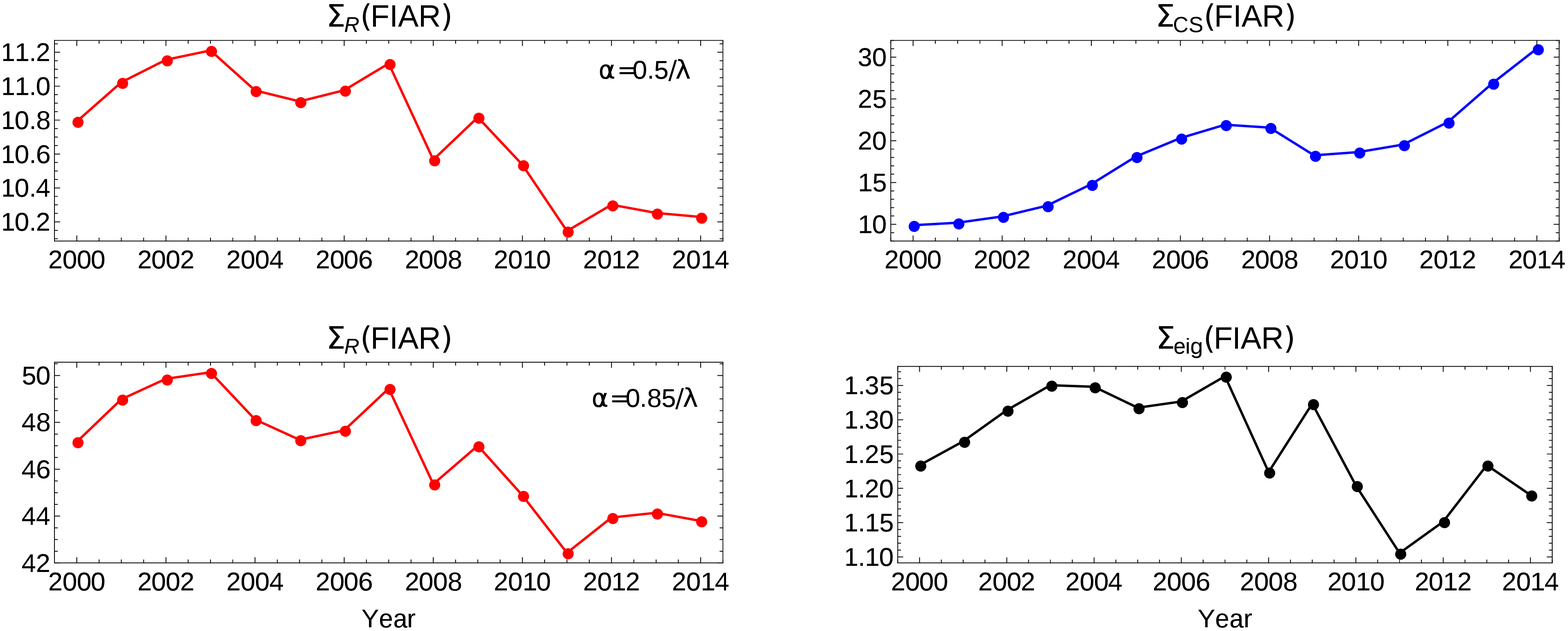}
\vspace{-1mm}
\caption{\label{OC} Left column: evolution of the resolvent-based centrality measure for the FIAR clique with two of the most commonly-used values for the Katz parameter: top $\alpha=0.5/\lambda$; bottom $\alpha=0.85/\lambda$. Top right: temporal evolution of the exponential-based centrality measure for the FIAR clique with a regularisation parameter of $10^5$. Bottom right: evolution of the eigenvector-based centrality measure for the FIAR clique.}
\vspace{-5mm}
\end{figure} 

The results for the FIAR clique are shown on Fig.~(\ref{OC}) and indicate the comparative failure of these  approaches. 
For example, according to $\Sigma_{R}(\text{FIAR})$, and regardless of the Katz parameter employed, the FIAR clique underwent a massive downturn between 2003 and 2007, when all economic indicators show that this period was one of unprecedented growth for the FIAR sectors \cite{Greenwood2013, Treasury2016, Economist2017}.

The eigenvector-centrality-based measure $\Sigma_{eig}(\text{FIAR})$ does not fare much better. According to it, one should believe that: 1) the  importance of the FIAR sectors in 2014 was slightly lower than in 2000; and 2) the centrality of the FIAR sectors inexplicably peaked in the year 2003 only to reach the same level over the year 2007. Both conclusions 1) and 2) are in contradiction with economic studies on the subject, especially concerning the year 2007 when the crisis saw the collapse of much of the FIAR sectors \cite{Greenwood2013, Economist2017}.

Finally, from the $\Sigma_{CS}(\text{FIAR})$ centrality, one should believe that by 2014,  the combined importance of the finance, insurance and real estate sectors was much higher\footnote{How much higher exactly depends on the regularisation parameter. The smaller $r$, the higher the ratio of the centralities between the years 2014 and 2006. For its smallest value guaranteeing convergence $r\sim 10^3$, the ratio is a totally improbable $4\times 10^{23}$.} than its maximum pre-crisis level in late 2006 $-$ early 2007. Yet, it is known that the housing market was more than 20$\%$ lower in real terms in 2014 than at its peak in late 2006 \cite{Economist2017}; and that the net-income of the insurance industry, in particular the Property and Casualty subsector most connected to the real-estate industry, was comparable in 2014 to its 2006 level after record losses in 2011, see \cite{Treasury2016} p. 32.\\[-1em] %\footnote{The P/C sector net income shows a remarquable correlation with $c(\text{FIAR})$, see \cite{Treasury2016} p.32.}.\\[-.5em] 
 
 % centrality measures $\Sigma_{eig}(\text{FIAR})$, $\Sigma_{CS}(\text{FIAR})$ and $\Sigma_{R}(\text{FIAR})$
In addition, the particular values taken by the resolvent-, eigenvector- and exponential-based centrality measures are rather difficult to interpret since they are not immediately related to quantities of real-world significance. 
At the opposite, the cycle-centrality $c(\gamma)$ is the proportion of the total capital flow of the economy that passes through the cycle $\gamma$. 
As a consequence, the results of an analysis using this cycle-centrality are easy to 
grasp and interpret, facilitating their appraisal with respect to external sources of information.
%\footnote{To illustrate the issue 
%of interpretation, observe that the relation between $eig(i)$ and $c(i)$ exposed in 
%Proposition~\ref{EigCentr} reveals the real-world meaning of the eigenvector 
%centrality in the context of the economic networks. It is proportional to the square root of the fraction of all the capital-flow passing through sector $i$. Besides, the proportionality constant $\eta$ varies each year$-$making time-series difficult to analyse$-$and is, to the best of our knowledge, devoid of 
%meaning.} 
We must conclude from these observations and those of the preceding paragraphs that the insights gained from this analysis 
are not easily replicated by other centrality measures.

\vspace{-7mm}
\section{Protein targeting in plant-pathogen interactions}\label{BioSection}
\vspace{-4mm}
We now turn to a biological context and consider the protein-protein interaction network (PPI) obtained by Mukhtar \textit{et al.} in a landmark study of plant-pathogens interactions between the plant \textit{Arabidopsis thaliana}, the bacterium \textit{Pseudomonas syringae} and the oomycete \textit{Hyaloperonospora arabidopsidis}. The network, %which is close to scale-free, 
comprises 3,148 interactions between 926 proteins, of which 170 are known to participate in plant immunity and 137 are targeted by effectors from one or both pathogens \cite{Mukhtar2011}. 

Before the original study of \cite{Mukhtar2011} it was already expected that the pathogens would target those proteins which are the most important to the plant \cite{Landry2011}, i.e. that most of the pathogen targets should be high-degree nodes (hubs) in the plant PPI. Here we call this hypothesis the degree-based model of protein-targeting. The model posits a positive correlation between protein-targeting and the degree-centrality of the proteins. Mukhtar \textit{et al.} confirmed such a correlation, showing it to be statistically significant, yet also observed shortfalls of the model, such as numerous low-degree targets and hubs targeted by few pathogen-effectors, if at all. Nonetheless, the degree-based model is the best available vertex-based model as replacing the degree by another vertex-centrality degrades model-performances, see Table~(\ref{CentTarget}). 
%In spite the ensuing observation by the authors that connectivity alone cannot fully explain protein-targeting by the pathogens \cite{Mukhtar2011}, the correlation between hubs and targets was deemed significant in subsequent works (e.g. \cite{Landry2011}). 

In their seminal study, Mukhtar \textit{et al.} also showed that highly connected proteins  tend to be involved in immune interactions \cite{Mukhtar2011, Landry2011}. Furthermore, subsequent biological  studies, notably into oomycetes, have shown that pathogen effectors are potent stimulants of immune activity in \textit{Arabidopsis thaliana} \cite{Oome2014, Fawke2015}. 
Consequently, we might expect the PPI to comprise small protein motifs involving not only a pathogen target, but also one or more interactions with an immune protein, interactions which may be stimulated by the activity of the pathogen on the target, and an accompanying central protein.\footnote{\label{CentralityFootnote}Here the centrality of a protein is understood to be its eigenvector centrality since, by Proposition~\ref{EigCentr}, this is the measure induced by the cycle-centrality on vertices.}
%If we now hypothesise that pathogens select their targets to maximise the fraction of disrupted sequences of protein reactions in the host, 
If we now hypothesise that pathogens primarily aim at disrupting a sizeable proportion of sequences of protein reactions in the host, then the motifs mentioned above should  have high cycle-centrality. This is because in the context of PPIs the cycle-centrality of a motif measures the fraction of sequences of protein interactions intercepted by the motif. In other words, pathogen-targets should primarily be found in triads with dominant cycle-centrality involving at least one target, one or more central proteins, and one or more immune interactions. 
%
%
%If we now hypothesise that pathogens primarily target plant proteins capable of disrupting a sizeable proportion of chains of protein reactions in the host, then we expect that the triads mentioned above should have high cycle-centrality.
%Concretely, in the plant-pathogen PPI, one should thus observe that the dominant triads, in terms of cycle-centrality, involve at least one target protein, one or more immune interactions and one or more central proteins. %as measured by the cycle-centrality.

%\begin{figure}[t!]
%\begin{center}
%\vspace{-70mm}
%\includegraphics[width=.2\linewidth]{CloseupTriad.eps}
%\end{center}
%\end{figure} 
%\begin{figure}[h!]
%\begin{center}
%\vspace{-35mm}
%\includegraphics[width=1.1\linewidth]{Closeup.eps}
%\vspace{-40mm}
%\end{center}
%\caption{\label{TriadClose} Close up of the most central triads in the plant-pathogen PPI of \textit{Arabidopsis thaliana}, \textit{Pseudomonas syringae} and \textit{Hyaloperonospora arabidopsidis}. Blue line: all triads. Red crosses: triads involving at least one target, one immune interaction and one of the two most central proteins, AT5G08080 and AT5G22290.}
%\end{figure} 

To test this model, which we call the dominant-triad model, we calculated the cycle-centrality of all 113,398 triads of 
proteins in the PPI using \cite{MatlabFiles}, which took $27$ min on the aforementioned computer, most of which was spent finding the triads. We then selected those triads 
involving at least one of the top two 2 proteins in terms of eigenvector centrality\footnotemark[4] (circa $2\%$ of all triads). These are AT5G08080 and AT5G22290 (in that order of centrality). The 
former likely belongs to a set of proteins involved in plant resistance against bacteria 
\cite{Kalde2007, TAIR1}, while the latter belongs to a family of a  transcription factors 
with a role in stress responses. More precisely, AT5G22290 negatively regulates 
flowering in response to stresses \cite{Li2010, TAIR2}. Remarkably AT5G08080 is not 
targeted at all by the pathogens, while AT5G22290 is targeted by a single effector in 
spite of being the most important hub of the plant PPI, with a degree of 222.\footnote{By 
contrast another protein, AT3G47620, is targeted by 29 effectors from both the 
bacterium and the oomycete yet has ``only'' degree 104 \cite{Mukhtar2011}.} 
Among the triads comprising AT5G08080 and/or AT5G22290, we classified as true positive those which involve at least one more target and at least one immune reaction. Finally, in order to compare the performances of the dominant-triad and degree-based models, we obtained the ROC curves for both. The results, presented on Fig.~(\ref{Triad}), clearly show the dominant-triad model out-performing the degree-based one of \cite{Mukhtar2011}. The performances of models based on the centralities $\Sigma_{CS}$, $\Sigma_R$ and $\Sigma_{eig}$ are reported in Table~(\ref{CentTarget}) for comparison.\footnote{Running the computations separately, each of these models would take the c. 30~min time to evaluate since the majority of this time is spent finding the triads.}
%\begin{remark}If we relax our criterion that triads of interest should comprise at least one of the 2 most central proteins to at least one of the 5 most central proteins, then the top $12\%$ of all triads fit the description ``target$-$immune response$-$central protein''. Collectively, these triads comprise 114 different targets, amounting to $83.2\%$ of all targeted proteins.
%\end{remark} 

These results suggest that the hypothesis where pathogens select their targets to maximise the fraction of disrupted sequences of protein reactions better fits the observations than the hypothesis where they target high-degree nodes of the PPI. In particular, the model explains why hubs are not the only targets nor necessarily the most targeted proteins, as interactions with peripheral proteins in the immediate vicinity of a central protein are seemingly equally disruptive to the ensemble of sequences of reactions on the PPI. The performance of the dominant-triad model also underscores the remarkable efficiency of the plant immune response: as the ROC curve shows, nearly all triads with the highest cycle-centrality involving a pathogen target also involve an immune interaction. 
%That is, if we drop the criterion that the true positives must involve at least one immune reaction, then the ROC AUC barely changes. 
Taken together, these observations paint the picture of a PPI where two central proteins are immediately surrounded by numerous pathogen targets and a flurry of immune interactions. 
%In total, we found no less than 94 distinct targets out of 137 (that is $69\%$) in this small environment.
\begin{figure}[t!]
\begin{center}
\vspace{-4mm}
\includegraphics[width=.7\linewidth]{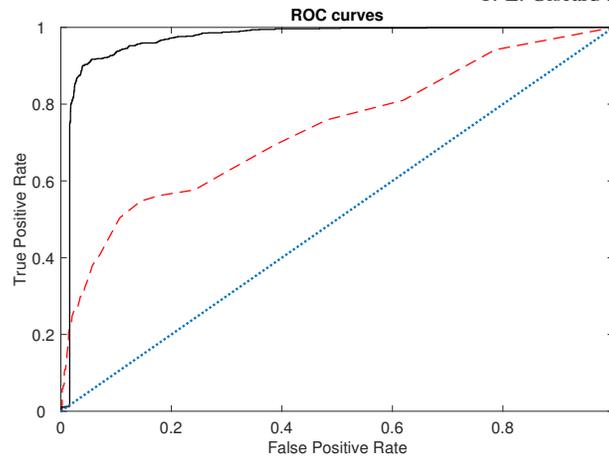}
\vspace{-3mm}
\end{center}
\caption{\label{Triad} Solid black line: ROC curve of the dominant-triad model in the plant-pathogen PPI of \textit{A. thaliana}, \textit{P. syringae} and \textit{H. arabidopsidis}. In this model, all triads involving AT5G08080 and/or AT5G22290 are ranked in descending order according to cycle-centrality. A true positive is a triad involving at least one more target and at least one immune reaction, while a false positive is a triad which does not meet both of these criteria. 
Red dashed line: ROC curve of the degree-based model proposed in \cite{Mukhtar2011}, where proteins are ranked in descending order according to their degree in the PPI. A true positive is a protein targeted by at least one pathogen effector. Dotted line: null-hypothesis model with random protein-targeting.}
\vspace{-2mm} 
\end{figure}

\vspace{-3.5mm}
\begin{table}%[tbhp]
\centering
\textbf{Performances of protein-targeting models}
\begin{tabular}{clccc}
&Model & ROC AUC & Discrimination \\
\hline
%1. cycle-centrality, top 5 & 114 & $83.2\%$  \\
 \parbox[t]{3mm}{\multirow{5}{*}{\rotatebox[origin=c]{90}{\scriptsize{Cycle-based}}}}&1. Dominant-triad $c(\gamma)$ & \textbf{0.97} & $\mathbf{0.47}$  \\
&2. $\Sigma_{R}$ ($\alpha=0.85/\lambda$) & 0.89 & 0.39  \\
&3. $\Sigma_{CS}$ & 0.88 & $0.38$  \\
&4. $\Sigma_{eig}$ & 0.87 & $0.37$  \\
&5. $\Sigma_{R}$ ($\alpha=0.5/\lambda$) & 0.85 & $0.35$  \\
\hline
%\textcolor{white}{0. cycle-centrality,} top $2\%$ & \textbf{94} & $\mathbf{68.6\%}$  \\
 \parbox[t]{3mm}{\multirow{5}{*}{\rotatebox[origin=c]{90}{\scriptsize{Vertex-based}}}}&6. Degree centrality \cite{Mukhtar2011}& 0.73 & $0.23$  \\
&7. Resolvent centrality ($\alpha=0.5/\lambda$) & 0.28 & $0.22$ \\
&8. Resolvent centrality ($\alpha=0.85/\lambda$) & 0.28 & $0.22$ \\
&9. Exponential centrality & 0.60 & $0.10$ \\
&10. Eigenvector centrality & 0.41 & $0.09$ \\
\hline
\end{tabular}
\vspace{1mm}
\caption{\label{CentTarget} The ROC AUC is the area under the ROC curve. A perfect model making only correct predictions would have ROC AUC=1, while the null-hypothesis yields ROC AUC=0.5. The discrimination is the (absolute) area between the ROC curve and the null-hypothesis line. The crucial difference between vertex-based and cycle-based models is that the former attempt at directly identifying individual protein-targets, while the latter aim at identifying targeted triads.}
\vspace{-7mm} 
\end{table}

\vspace{-6mm}
\section{Conclusion}
\vspace{-4mm}
We have introduced a centrality measure for cycles on networks that quantifies the (weighted) fraction of information flow intersecting the cycle. This measure is computationally cheap to calculate, numerically well-conditioned, and extends both to arbitrary subgraphs and vertices, where it reduces to the eigenvector centrality. 
By studying the evolution of the US economy over the period 2000$-$2014, we have shown that the cycle-centrality correlates with major events impacting this economy and could potentially serve as an objective quantifier of the effect of crashes and novel legislations. In the biological context of plant-pathogens interactions, we have shown that a model where pathogens select their targets to maximise the number of sequences of protein reactions that they intercept better fits for the observations than a model where pathogens target high-degree proteins.

\vspace{-3.5mm}
\begin{acknowledgement}
We thank Paul Rochet of the Laboratoire Jean-Leray, Nantes, France, for stimulating discussions. P.-L. Giscard is grateful for the financial support from the Royal Commission for the Exhibition of 1851.
\end{acknowledgement}

\vspace{-11mm}

\end{document}